\DeclareMathOperator{\atan}{tan^{-1}}
\DeclareMathOperator{\sumn}{\sum_{n=1}^\infty}
\DeclareMathOperator{\onen}{(-1)^{n+1}}
\newtheorem{theorem}{Theorem}
\newtheorem{corollary}{Corollary}
\begin{document}

\title{Information transmission bounds between moving terminals}

\author{Omar J. Faqir, Eric C. Kerrigan, Deniz G\"und\"uz
    \thanks{\textcopyright 2020 IEEE.  Personal use of this material is permitted.  Permission from IEEE must be obtained for all other uses, in any current or future media, including reprinting/republishing this material for advertising or promotional purposes, creating new collective works, for resale or redistribution to servers or lists, or reuse of any copyrighted component of this work in other works.}   
	\thanks{The support of the EPSRC Centre for Doctoral Training in High Performance Embedded and Distributed Systems  (HiPEDS, Grant Reference EP/L016796/1) is gratefully acknowledged. 
	This work was also supported in part by the European Research Council (ERC) through Starting Grant BEACON (agreement no. 677854).}
	\thanks{O. J. Faqir and Deniz G\"und\"uz are with the Department of Electrical \& Electronics Engineering, Imperial College London, SW7~2AZ, U.K. {\tt\small ojf12@ic.ac.uk}, {\tt\small d.gunduz@ic.ac.uk}}%
	\thanks{Eric C. Kerrigan is with the Department of Electrical \& Electronic Engineering and Department of Aeronautics, Imperial College London, London SW7~2AZ, U.K. {\tt\small e.kerrigan@imperial.ac.uk}}%
}


\maketitle

\begin{abstract}
	In networks of mobile autonomous agents, e.g.\ for data acquisition, we may wish to maximize data transfer or to reliably transfer a minimum amount of data, subject to quality of service or energy constraints.
	These requirements can be guaranteed through both offline node design/specifications and online trajectory/communications design.
	Regardless of the distance between them, for a stationary point-to-point transmitter-receiver pair communicating across a single link under average power constraints, the total data transfer is unbounded as time tends to infinity.
	In contrast, we show that if the transmitter/receiver is moving at any constant speed away from each other, then the maximum transmittable data is bounded. Although general closed-form expressions as a function of communication and mobility profile parameters  do not yet exist, we provide closed-form expressions for particular cases, such as ideal free space path loss.
	Under more general scenarios we instead give lower bounds on the total transmittable information across a single link between mobile nodes.
\end{abstract}

\section{Introduction}
As autonomous agents become commonplace, the challenging task of developing control policies for communication between agents must be addressed.
For a stationary point-to-point transmitter-receiver pair under an average power constraint, the total transmittable data is unbounded as time tends to infinity, regardless of distance. In contrast, we show that if the nodes move away from each other at a constant speed, then the maximum transmittable data is bounded even as time tends to infinity. 
Without loss of generality we consider one stationary and one mobile node.

This bound is of relevance for aerial communications, where channel gain is dominated by path loss exponents and typically high speeds result in rapid growth in distances. Examples include unmanned aerial vehicles (UAVs) or spacecraft.
We consider dynamic deployment scenarios (e.g.\   \cite{faqir2018energy,zeng2017energy,wu2019fundamental,zeng2016throughput}) where UAVs, spacecraft, or other agents are transmitting while moving, and are interested in characterising the total data transfer as a function of mobility dynamics and design parameters, such as transmit power.
We do not consider transmitter power control as a function of distance as this is challenging to implement in practice, especially at higher speeds.
Linear constant speed trajectories, as considered in \cite{faqir2018energy,zeng2017energy}, have practical importance for fixed-wing aircraft, since minimum propulsion energy operation occurs at a constant velocity, and result in structured time-varying communication channels. 
For dense UAV deployment scenarios with random mobility, works such as \cite{yuan2018capacity} determine capacity bounds and outage probabilities.
Even for a simple linear trajectory in 3D space, the expression for total transmittable data formulated in Section~\ref{sec:problemDef} is non-trivial to evaluate due to the structure of the argument of the logarithm. We instead derive closed-form expressions for specific cases. In Section~\ref{sec:OneDimension} we derive a bound for total transmittable data in 1D and show the bound to be tight for reasonable parameter values. In Section~\ref{sec:alpha2} we consider arbitrary linear agent trajectories deriving an expression for total data transmission, assuming ideal free space path loss.

\section{Problem Definition}\label{sec:problemDef}
The Shannon capacity is an information-theoretic limit on the achievable communication rate for reliable transmission across a noisy channel. The corresponding limit on the total amount of transmittable data over time $T$ is
\begin{equation}\label{eq:Shannon_integral}
D_T := B\int_0^T \ln\left(1 + \operatorname{SNR}(t)\right)\mathrm{d}t,
\end{equation}
where $B$ is the bandwidth and $\operatorname{SNR}(\cdot)$ the time-varying signal-to-noise ratio at the receiver. $D_T$ is measured in nats (natural units of information). Neglecting fast-fading dynamics, the channel gain is dominated by line-of-sight fading, typical of aerial channels \cite{faqir2018energy}. Then, the $\operatorname{SNR}$ is 
\begin{equation} \label{eq:SNR}
\operatorname{SNR}(P,t) = \frac{P G}{\sigma^2}\left(\frac{d_0}{d(t)}\right)^\alpha,
\end{equation}
where $\sigma^2$ is the receiver noise power, $\alpha\geq2$ (but typically not much greater than $2$) is the path loss exponent, and $d(\cdot)$ is the time-varying distance between transmitter and receiver, satisfying $d(t)>d_0,\forall t\in\mathbb{R}^+,$ for reference distance~$d_0$. $P$ is the average transmission power in Watts and $G$ is a unitless antenna parameter representing gain and path loss at the distance $d_0$. For simplicity we define $S:=PG\sigma^{-2}$.

Consider wireless transmission between a stationary node at the origin and a mobile transmitter at constant longitudinal displacement $z_0\geq 0$. The transmitter follows a linear trajectory at constant speed $v$. At time $t \in [0,T]$ the lateral position of the transmitter relative to the receiver is
\begin{equation} \label{eq:UAV_position}
x(t) = x_0 + vt,
\end{equation}
and $x_0$ the initial lateral position. The distance is
\begin{equation} \label{eq:distance}
d(t) = \sqrt{z_0^2 + (x_0 + vt)^2}.
\end{equation}
Simplifying the distance operator through removing the $\sqrt{(\cdot)}$, either by assuming $z_0=0$ (Section~\ref{sec:OneDimension}) or setting $\alpha=2$ (Section~\ref{sec:alpha2}), allows us to determine closed-form expressions for the total amount of data that can be reliably communicated in these special cases.

\section{Characterization for general $\alpha$ in 1D}\label{sec:OneDimension}
 
We find the total data transfer in a one-dimensional setting, where $z_0=0$, and hence $d(t)=x_0+vt$, as a function of $B$, $v$, $S$ and $\alpha$. The restriction $z_0=0$ simplifies the analysis and arises in the context of a mobile UAV communicating with a static aerial base station at the same altitude.

\begin{theorem} \label{th:one_dimension}
 	The total transmittable data as the interval length $T\rightarrow \infty$ for 
 	$z_0=0$, $x_0= d_0$ and $v>0$, i.e.
 	\begin{equation} \label{eq:int_inf_1D}
 	D_{\infty,1} := \lim_{T\rightarrow \infty} B \int_0^T \ln\left(1 + S\left(\frac{d_0}{d_0 + vt}\right)^\alpha\right) \mathrm{d}t,
 	\end{equation}
 	is finite and given by
	\begin{equation} \label{eq:D_inf_1D}
	\begin{split}
	D_{\infty,1} = \frac{B d_0}{v} \Bigg(& \pi\sqrt[\alpha]{S} \csc\left(\frac{\pi}{\alpha}\right) - \ln(1+S) + \\
	&\alpha \sum_{n=1}^\infty \left\{\frac{(-1)^{n+1}}{S^n(\alpha n + 1)}\right\} - \alpha \Bigg),
	\end{split}
	\end{equation}
	where the sum is positive and convergent for $S>1$.
\end{theorem}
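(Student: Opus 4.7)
The plan is to reduce the integral to a dimensionless form by a change of variables, integrate by parts to turn the logarithm into an algebraic integrand, then split the resulting integral to exploit two standard tools: a Beta-function evaluation on $[0,\infty)$ and a geometric-series expansion on $[0,1]$.

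First, I would substitute $y = (d_0 + vt)/d_0$, giving $dt = (d_0/v)\,dy$ and new limits $[1,\,1+vT/d_0]$, so that
\begin{equation*}
D_T = \frac{Bd_0}{v}\int_1^{1+vT/d_0}\!\ln\bigl(1 + S/y^\alpha\bigr)\,dy.
\end{equation*}
Finiteness of the $T\to\infty$ limit follows from the elementary bound $\ln(1+S/y^\alpha)\leq S/y^\alpha$, which is integrable on $[1,\infty)$ since $\alpha\geq 2>1$; monotone convergence then gives $D_{\infty,1} = (Bd_0/v)\int_1^\infty \ln(1+S/y^\alpha)\,dy$.

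Next I would integrate by parts with $u=\ln(1+S/y^\alpha)$ and $dv=dy$. The boundary term at infinity vanishes because $y\ln(1+S/y^\alpha)\sim S/y^{\alpha-1}\to 0$, leaving
\begin{equation*}
\int_1^\infty \ln\bigl(1 + S/y^\alpha\bigr)\,dy = -\ln(1+S) + \alpha S\int_1^\infty \frac{dy}{y^\alpha + S}.
\end{equation*}
I would then split $\int_1^\infty = \int_0^\infty - \int_0^1$. For the full-line integral, the substitution $w = y\,S^{-1/\alpha}$ reduces it to the classical identity $\int_0^\infty dw/(w^\alpha+1) = (\pi/\alpha)\csc(\pi/\alpha)$, producing the $\pi\sqrt[\alpha]{S}\csc(\pi/\alpha)$ term. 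For the $[0,1]$ portion, since $y^\alpha/S < 1$ whenever $S>1$, I would expand $1/(y^\alpha+S) = S^{-1}\sum_{n\geq 0}(-y^\alpha/S)^n$ and integrate term-by-term (justified by uniform convergence on $[0,1]$), then isolate the $n=0$ term. This produces the constant $-\alpha$ and the alternating series in \eqref{eq:D_inf_1D}.

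Finally, the claimed convergence and positivity of the series for $S>1$ follows easily: the ratio of successive magnitudes is $(\alpha n+1)/[S(\alpha(n+1)+1)] < 1/S$, giving absolute convergence, and the magnitudes are strictly decreasing, so pairing consecutive terms $(a_1-a_2)+(a_3-a_4)+\cdots$ yields a positive sum. The main obstacle is really the closed-form evaluation of $\int_0^\infty dw/(w^\alpha+1)$, which I would obtain either via a keyhole contour integral or via the substitution $s=1/(1+w^\alpha)$ that reduces it to the Beta integral $B(1/\alpha,\,1-1/\alpha) = \pi\csc(\pi/\alpha)$; every remaining step is routine real analysis.
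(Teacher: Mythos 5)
Your proof is correct, and it takes a genuinely different route from the paper's. The paper expands the logarithm itself as a series, using the two-case identity for $\ln(1+z)$ depending on whether $|z|$ is below or above $1$; this forces a split of the integration domain at the point where $S d_0^\alpha x^{-\alpha}=1$, term-by-term integration of both pieces, and then a recombination of the two resulting sums via a partial-fraction manipulation and the cosecant series $\csc(z)=\tfrac{1}{z}+2z\sum_n \tfrac{(-1)^n}{z^2-(\pi n)^2}$ to surface the $\pi\sqrt[\alpha]{S}\csc(\pi/\alpha)$ term. You instead integrate by parts first, which converts the logarithmic integrand into the rational function $\alpha S/(y^\alpha+S)$ and produces the $-\ln(1+S)$ boundary term immediately; the split $\int_1^\infty=\int_0^\infty-\int_0^1$ then isolates the cosecant term as the classical Euler/Beta integral $\int_0^\infty \mathrm{d}w/(w^\alpha+1)=(\pi/\alpha)\csc(\pi/\alpha)$, and a single geometric expansion on $[0,1]$ yields the alternating series and the constant $-\alpha$ from its $n=0$ term. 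Your route buys a cleaner derivation: it needs only one series expansion (valid uniformly on $[0,1]$ for $S>1$, matching the theorem's hypothesis), it avoids the cosecant partial-fraction identity entirely by outsourcing that content to a standard definite integral, and it establishes finiteness up front by the comparison $\ln(1+S/y^\alpha)\le S/y^\alpha$ rather than letting it emerge from the computation. The paper's route is more self-contained in the sense that it uses only series identities quoted from Knopp, whereas yours leans on the Beta-integral evaluation; both are rigorous and arrive at the identical closed form, including your correct justification of convergence and positivity of the residual alternating sum for $S>1$.
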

See Appendix~\ref{app:A} for the proof. Restricting $S>1$ implies that the transmit signal power, before undergoing any channel attenuation, is greater than the receiver noise power. In practice we find it reasonable to assume $S\gg1$.
Since the sum is monotonically decreasing in both $S$ and $\alpha$, we may determine a lower bound $\tilde{D}_{\infty,1}$ on $D_{\infty,1}$.

\begin{corollary} \label{co:approximation}
A lower bound for $D_{\infty,1}$ is given by
	\begin{equation} \label{eq:D_inf_1D_approx}
	\begin{split}
	\tilde{D}_{\infty,1} := \frac{B d_0}{v} \left(\pi\sqrt[\alpha]{S} \csc\left(\frac{\pi}{\alpha}\right) - \ln(1+S) - \alpha \right),
	\end{split}
	\end{equation}
	where the error $e_{\infty,1}$ between $D_{\infty,1}$ and the bound $\tilde{D}_{\infty,1}$, 
	\begin{equation} \label{eq:ApproxError}
		e_{\infty,1} := D_{\infty,1} - \tilde{D}_{\infty,1} = \frac{B d_0}{v}\sum_{n=1}^\infty \left\{\frac{(-1)^{n+1}}{S^n(\alpha n + 1)}\right\},
	\end{equation}
	 is bounded as
	\begin{equation}  \label{eq:D_inf_1D_approx_error}
	e_{\infty,1} \leq \frac{B d_0 \alpha}{v} \left( 1 - \sqrt{S}\atan\left(\frac{1}{\sqrt{S}}\right)\right) \leq \frac{B d_0 \alpha}{v} \left( 1 - \frac{\pi}{4}\right).
	\end{equation}
	Furthermore, $e_{\infty,1} \rightarrow 0$ as $S \rightarrow \infty$.
\end{corollary}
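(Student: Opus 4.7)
My plan is to interpret the alternating series appearing in Theorem~\ref{th:one_dimension} as an integral, so that both positivity (hence the lower-bound claim) and the closed-form upper bound on $e_{\infty,1}$ reduce to elementary estimates. Subtracting $\tilde{D}_{\infty,1}$ from the expression \eqref{eq:D_inf_1D} for $D_{\infty,1}$ leaves, up to the prefactor $Bd_0\alpha/v$, the series $\sum_{n=1}^\infty (-1)^{n+1}/(S^n(\alpha n+1))$. The key identity I would use is $1/(\alpha n+1)=\int_0^1 t^{\alpha n}\,dt$: substituting this in, interchanging sum and integral (justified because the underlying geometric series in $-t^\alpha/S$ converges uniformly on $[0,1]$ when $S>1$), and summing the geometric series collapses the error to
\[
e_{\infty,1}=\frac{Bd_0\alpha}{v}\int_0^1\!\frac{t^\alpha}{S+t^\alpha}\,dt.
\]
The integrand is non-negative, so $e_{\infty,1}\geq 0$ and hence $\tilde{D}_{\infty,1}\leq D_{\infty,1}$, which proves the lower-bound claim.

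For the first inequality in \eqref{eq:D_inf_1D_approx_error}, I would use $\alpha\geq 2$ with $t\in[0,1]$ to get $t^\alpha\leq t^2$. Writing $t^\alpha/(S+t^\alpha)=1-S/(S+t^\alpha)$ makes the monotonicity in the exponent transparent, so the integrand is dominated pointwise by $t^2/(S+t^2)$. The dominating integral has a standard closed form: $\int_0^1 t^2/(S+t^2)\,dt = 1 - S\int_0^1 (S+t^2)^{-1}\,dt = 1 - \sqrt{S}\,\atan(1/\sqrt{S})$, using the antiderivative $(1/\sqrt{S})\atan(t/\sqrt{S})$.

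For the second inequality, $1-\sqrt{S}\,\atan(1/\sqrt{S})\leq 1-\pi/4$ on $S\geq 1$, the substitution $u=1/\sqrt{S}\in(0,1]$ reduces the claim to $\atan(u)/u\geq \pi/4$. This holds because $\atan(u)/u$ is decreasing on $(0,1]$ (its derivative has sign controlled by $u/(1+u^2)-\atan u$, which is non-positive by the Taylor expansion of $\atan$) and equals $\pi/4$ at $u=1$. The same substitution gives $\sqrt{S}\,\atan(1/\sqrt{S})=\atan(u)/u\to 1$ as $u\to 0^+$, so the upper bound on $e_{\infty,1}$ tends to zero, and with $e_{\infty,1}\geq 0$ this forces $e_{\infty,1}\to 0$ as $S\to\infty$. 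The main obstacle is recognising the integral representation of the series; once that is in hand, both bounds reduce to the pointwise comparison $t^\alpha\leq t^2$ on $[0,1]$ and standard facts about $\atan$.
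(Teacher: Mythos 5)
Your proposal is correct, but it takes a genuinely different route from the paper. The paper's proof works directly on the series: it asserts that $\sum_{n\geq1}(-1)^{n+1}/(S^n(\alpha n+1))$ is monotonically decreasing in $\alpha$, compares it to the $\alpha=2$ series, and evaluates the latter by recognising the Taylor expansion of $\atan$. You instead convert the series into the integral $\int_0^1 t^\alpha/(S+t^\alpha)\,\mathrm{d}t$ via $1/(\alpha n+1)=\int_0^1 t^{\alpha n}\,\mathrm{d}t$ and geometric summation. This buys you several things the paper leaves implicit: positivity of the error (hence the actual ``lower bound'' claim) is immediate from the non-negative integrand, and the monotonicity in $\alpha$ becomes the pointwise comparison $t^\alpha\leq t^2$ on $[0,1]$ --- whereas the paper's termwise monotonicity claim is not obvious for an alternating series and is stated without justification. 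Your formula $e_{\infty,1}=\frac{Bd_0\alpha}{v}\int_0^1 t^\alpha/(S+t^\alpha)\,\mathrm{d}t$ also correctly carries the factor $\alpha$ that \eqref{eq:D_inf_1D} requires but that \eqref{eq:ApproxError} as printed omits. Two small polish points: your justification that $u/(1+u^2)\leq\atan u$ ``by the Taylor expansion'' is shaky (the resulting alternating series does not have monotone terms at $u=1$); it is cleaner to note that the difference vanishes at $u=0$ and has derivative $2u^2/(1+u^2)^2\geq0$. Better still, your own representation gives $1-\sqrt{S}\atan(1/\sqrt{S})=\int_0^1 t^2/(S+t^2)\,\mathrm{d}t$, which is manifestly decreasing in $S$, yielding both the bound $1-\pi/4$ at $S=1$ and the limit $e_{\infty,1}\to0$ as $S\to\infty$ in one stroke.
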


See Appendix~\ref{app:B} for a proof. Fig.~\ref{fig:1DApproxError} compares $\tilde{D}_{\infty,1}$ and $D_{\infty,1}$ for the parameters shown in Table~\ref{tab:Params} and a range of~$S$\footnote{For the sake of relatability all data is plotted in MB, not nats.}. Setting $P=1$\,mW results in $S=10^5$. The corresponding error is shown in Fig.~\ref{fig:1DApproxError_2} as $e_{\infty,1} = 3.33\times 10^{-6}$\,MB$=0.278$\,nats. Fig.~\ref{fig:1DApproxError_2} also shows that $e_{\infty,1}=\frac{B d_0 \alpha}{v} \left( 1 - \frac{\pi}{4}\right) = 1.5\times10^{-3}$\,MB when $S=1$, but decreases rapidly as $S$ increases. 

\begin{table}[tb]
	\centering
	\caption[TODO]{Table of \textit{default} simulation parameters. Graph axes and labels indicate where parameters differ from defaults.}
	\begin{tabular}{|l|l|l|l|l|l|l|}
		\hline
		$B$(Hz) & $\sigma^{2}$(W) & $d_0$(m) & $G$ & $P$(W) & $v\text{(ms}^{-1}$)& $\alpha$ \\
		\hline
		$10^5$ & $10^{-8}$ & $1$ & $1$ & $1$ & $5$ & $2$ \\
		\hline
	\end{tabular}
	\label{tab:Params}
\end{table}
\begin{figure}[t]
    \vspace{-0.6cm}
	\centering
	\subfloat[Comparison of $D_{\infty,1}$ and $\tilde{D}_{\infty,1}$]{%
		\includegraphics[width=0.5\columnwidth]{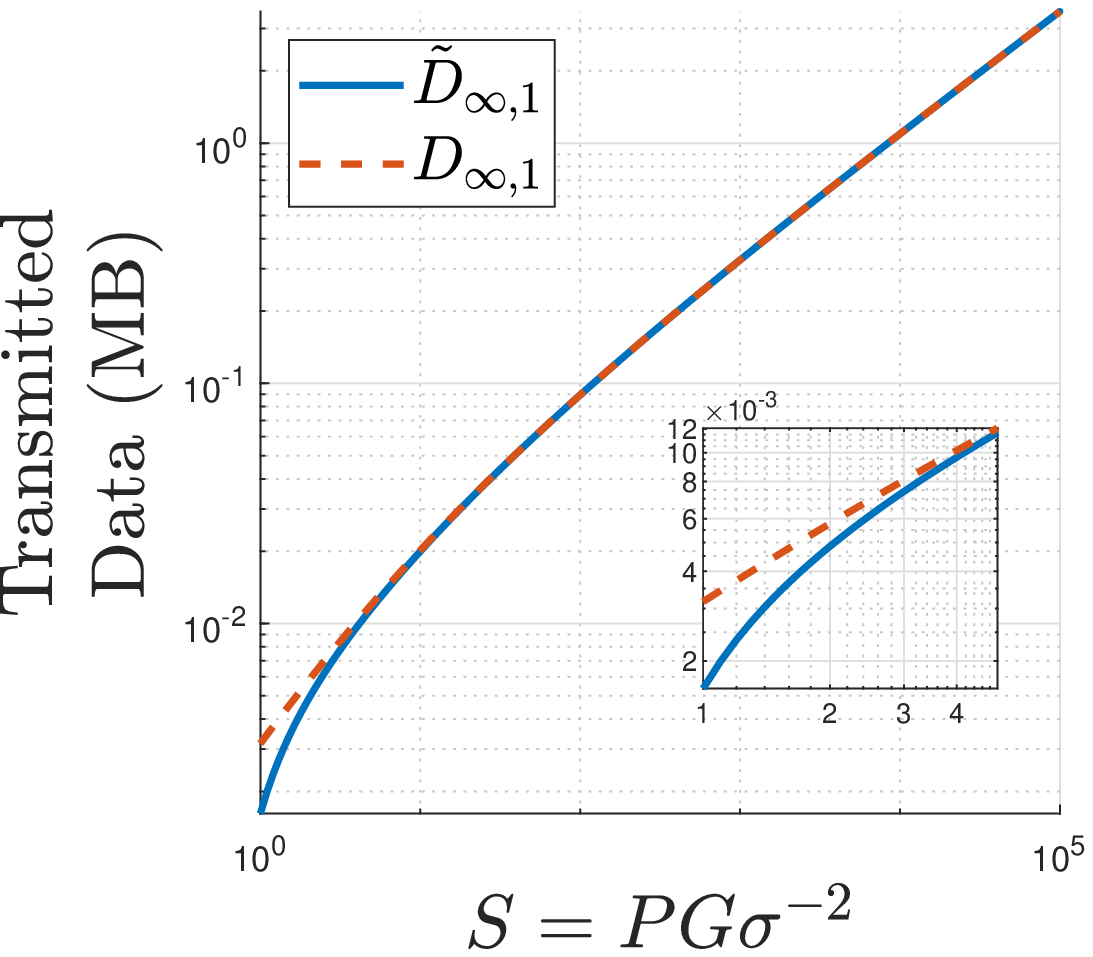}%
		\label{fig:1DApproxError_1}%
	}
	\subfloat[Comparison of $e_{\infty,1}$ with bound]{%
		\includegraphics[width=0.5\columnwidth]{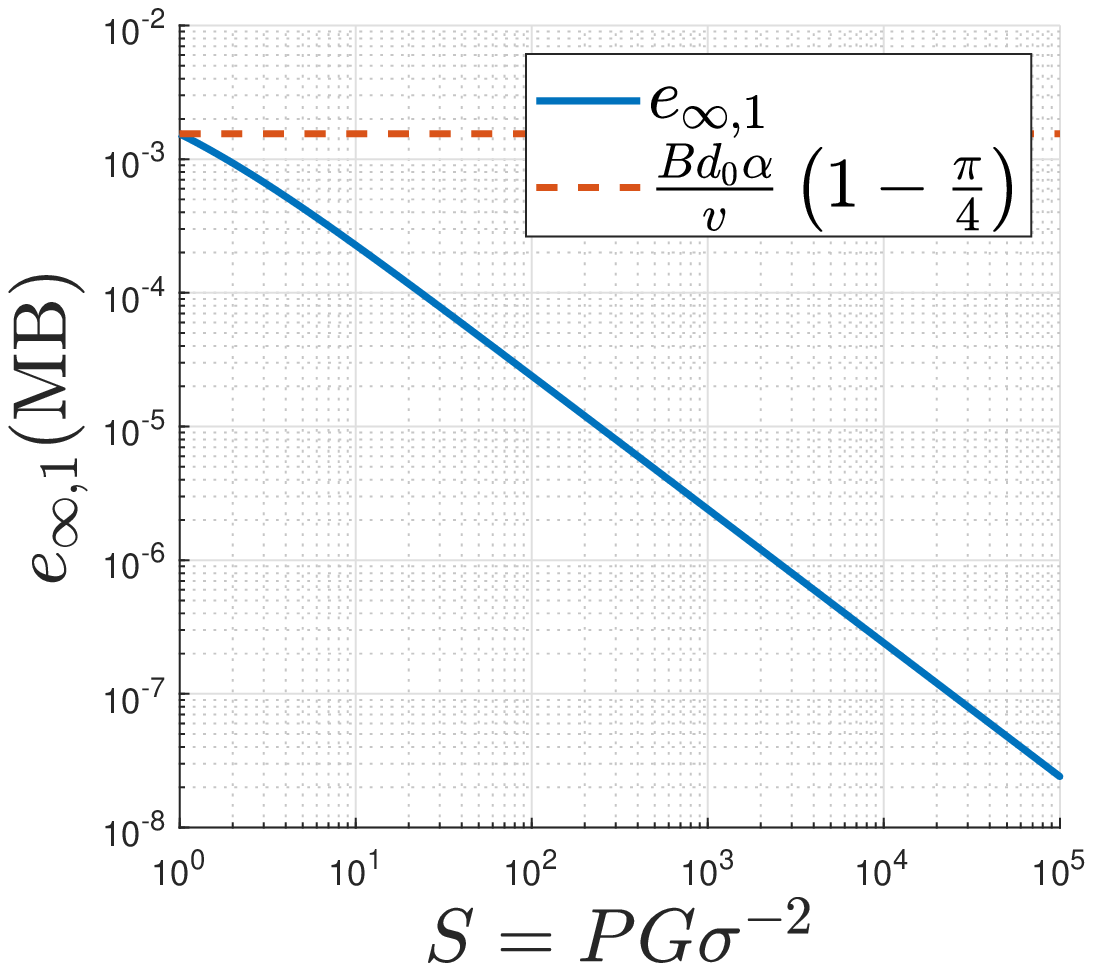}%
		\label{fig:1DApproxError_2}%
	}
	\caption{Comparison of $D_{\infty,1}$, $\tilde{D}_{\infty,1}$.}
	\label{fig:1DApproxError}
\end{figure}

Fig.~\ref{fig:Power_Velocity_Figure} shows $D_{\infty,1}$ as a function of speed $v\in[1,100]$ and power $P\in[10^{-3},100]$, calculated using \eqref{eq:D_inf_1D}. We approximate the infinite sum by the first $100$ terms.
Since $e_{\infty,1}$ is monotonically decreasing in $P$ and $v$, the maximum approximation error corresponds to data point $(P,v)=(10^{-3},1)$ and is $e_{\infty,1}=0.333$\,nats.
Although controlling node mobility is outside of the scope of this paper, Fig.~\ref{fig:Power_Velocity_Figure}  shows the set $\{(P,v) \mid D_{\infty,1} = M\}$ of admissible powers and velocities that successfully transmit $M$\,bits of data. $(P,v)$ may then be chosen from this set to achieve alternative operating points, such as energy efficiency \cite{faqir2018energy}. Since many transmitters operate at a single, or finite set of power levels,  $v$ may often be found as an explicit function of $(P,M)$.
\begin{figure}[b]
	\centering
		\includegraphics[width=0.9\columnwidth]{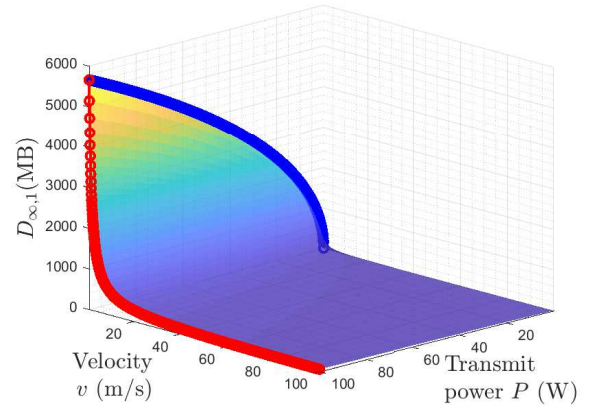}
	\caption[TODO]{$D_{\infty,1}$ as a function of speed $v$ and transmit power $P$  in the 1D case ($z_0=0$) for parameters from Table~\ref{tab:Params}.}
	\label{fig:Power_Velocity_Figure}
\end{figure}
For comparison, the red and blue lines show $D_{\infty,1}$ calculated from the Matlab  quadrature function \texttt{integral} of \eqref{eq:Shannon_integral} for infinite $T$.
Fig.~\ref{fig:FiniteTimePlot} shows $D_T/D_{\infty,1}$ as a function of both time and position for a range of speeds. $D_T$ converges faster to $D_{\infty,1}$ with increasing $v$. Even for a moderately slow speed of $v=5$\,m/s, over $80\%$ of the data is transmitted within one hour.
For the AWGN case here, with fixed power and speed, any SNR type constraint can be directly converted to a maximum allowable distance between the transmitter and receiver.
Similarly, if constraints on node endurance (e.g.\ finite energy) are functions of velocity, one may quantitatively investigate how limiting endurance is, compared to the bound $D_{\infty,1}$. Fig.~\ref{fig:FiniteTimePlot} shows a decreasing marginal gain in transmitted data as endurance increases.

\begin{figure}[tb]
	\centering
	\includegraphics[width=\columnwidth]{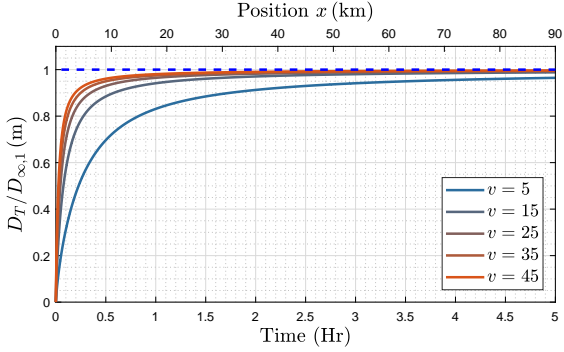}
	\caption[TODO]{Finite-time data $D_T$ as a proportion of the bound $D_{\infty,1}$ in the 1-D case for a range of speeds $v$ in m/s.}
	\label{fig:FiniteTimePlot}
\end{figure}

\section{Characterization for $\alpha=2$} \label{sec:alpha2}

We now consider the specific, but not unrealistic case of $\alpha=2$, corresponding to ideal free space. The maximum transmittable data over time $T$ simplifies to
\begin{equation}\label{eq:DFiniteTime}
D_T = B \int_0^T \ln\left(1 + \frac{S d_0^2}{z_0^2 + x(t)^2}\right)\mathrm{d}t.
\end{equation}

\begin{theorem} \label{th:alpha2}
	The total transmittable data as  $T\rightarrow\infty$ for a UAV with starting position $x_0\geq d_0$, $v>0$ and $\alpha=2$, i.e.
	\begin{equation} \label{eq:int_inf_alpha2}
	D_{\infty,2} :=  \lim_{T\rightarrow \infty} B \int_0^T \ln\left(1 + \frac{S d_0^2}{z_0^2 + x(t)^2}\right)\mathrm{d}t,
	\end{equation}
	for  trajectory \eqref{eq:UAV_position} is finite and given by
	\begin{equation}
	\begin{split} \label{eq:D_infty_2}
		D_{\infty,2} &= \frac{B}{v} \Bigg[ -x_0\ln \left( 1 + S\right)  + \pi\left(\sqrt{\epsilon}-z_0\right)\\
		&-2 \left(\sqrt{\epsilon}\atan\left(\frac{x_0}{\sqrt{\epsilon}}\right) -x_0\atan\left(\frac{x_0}{z_0}\right)\right)
		\Bigg].
	\end{split}
	\end{equation}
	where $\epsilon := z_0^2 + Sd_0^2$.
\end{theorem}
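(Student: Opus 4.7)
The plan is to reduce the integral to a single substitution followed by a standard antiderivative, then take the improper limit carefully. First I would substitute $u = x(t) = x_0 + vt$, so that $\mathrm{d}u = v\,\mathrm{d}t$ and the integration range becomes $[x_0,\infty)$; this turns \eqref{eq:int_inf_alpha2} into
\begin{equation*}
D_{\infty,2} = \frac{B}{v}\int_{x_0}^\infty \ln\!\left(\frac{\epsilon + u^2}{z_0^2 + u^2}\right)\mathrm{d}u,
\end{equation*}
where I have combined the argument of the logarithm over a common denominator and introduced $\epsilon := z_0^2 + Sd_0^2$. Splitting this as the difference $\ln(\epsilon+u^2)-\ln(z_0^2+u^2)$ gives two individually divergent improper integrals whose divergences cancel, a fact I would exploit in the next step rather than fight.

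Next I would use the antiderivative $\int\ln(a^2+u^2)\,\mathrm{d}u = u\ln(a^2+u^2) - 2u + 2a\atan(u/a)$, which is obtained by integration by parts ($\mathrm{d}v=\mathrm{d}u$) and a partial-fraction reduction of $u\cdot 2u/(a^2+u^2)$. Subtracting the two pieces (with $a=\sqrt{\epsilon}$ and $a=z_0$ respectively) the linear $-2u$ contributions cancel, leaving the bounded antiderivative
\begin{equation*}
G(u) := u\ln\!\left(\frac{\epsilon+u^2}{z_0^2+u^2}\right) + 2\sqrt{\epsilon}\atan\!\left(\frac{u}{\sqrt{\epsilon}}\right) - 2z_0\atan\!\left(\frac{u}{z_0}\right).
\end{equation*}
This immediately makes the convergence of the improper integral transparent.

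I would then evaluate $G$ at the two endpoints. The value at $u=x_0$ is a direct substitution. For the upper limit, I would expand $\ln(1+(\epsilon-z_0^2)/(z_0^2+u^2)) = O(u^{-2})$ as $u\to\infty$, so the first term of $G(u)$ vanishes, while both arctangents tend to $\pi/2$, giving $G(\infty) = \pi(\sqrt{\epsilon}-z_0)$. Forming $G(\infty)-G(x_0)$, multiplying by $B/v$, and regrouping the $\atan$ terms should then yield \eqref{eq:D_infty_2}. Finally I would check that the hypothesis $x_0\geq d_0$ ensures $d(t)=\sqrt{z_0^2+(x_0+vt)^2}\geq d_0$ throughout, so the path-loss model \eqref{eq:SNR} is valid on the whole trajectory.

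The main obstacle is the improper limit at infinity: one must justify that the two individually logarithmically divergent integrals can be legitimately combined. The cleanest way is precisely the cancellation of the $-2u$ terms in the combined antiderivative together with the $O(u^{-2})$ bound on $\ln((\epsilon+u^2)/(z_0^2+u^2))$, which also gives integrand decay like $O(u^{-2})$ and hence absolute convergence of the original integral — everything after that is bookkeeping.
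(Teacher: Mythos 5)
Your proposal is correct and follows essentially the same route as the paper's Appendix~C: the same change of variables $u=x_0+vt$, the same splitting of the logarithm into $\ln(\epsilon+u^2)-\ln(z_0^2+u^2)$, the same integration-by-parts antiderivative with cancellation of the $-2u$ terms, and the same limit evaluation via the $O(u^{-2})$ decay and the arctangents tending to $\pi/2$. The only point to watch in your final ``bookkeeping'' step is that the computation actually produces $-x_0\ln\bigl(1+Sd_0^2/(z_0^2+x_0^2)\bigr)$ and $+2z_0\atan\left(x_0/z_0\right)$, which matches the displayed result \eqref{eq:D_infty_2} only up to what appear to be typographical slips in the statement (the coefficient $x_0$ on $\atan\left(x_0/z_0\right)$, and the argument of the logarithm when $z_0>0$ or $x_0>d_0$), so you should not expect exact agreement with \eqref{eq:D_infty_2} as printed.
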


The proof is in Appendix~\ref{app:C}. Fig.~\ref{fig:z0} shows how $D_{\infty,2}$ varies with displacement $z_0$.
In the straightforward case of $\alpha=2, \ z_0=0$ a simpler expression
follows.

\begin{figure}[tb]
	\centering
	\includegraphics[width=\columnwidth]{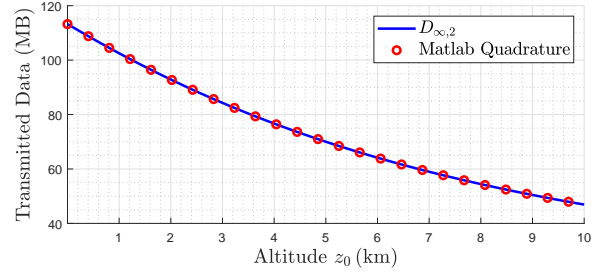}
	\caption[TODO]{$D_{\infty,2}$ as a function of altitude $z_0$.}
	\label{fig:z0}
\end{figure}

\begin{corollary}
	In the 1-D case ($z_0=0$), with $x_0 = d_0$ and $v>0$,
	\begin{equation}
	D_{\infty,3} :=  \lim_{T\rightarrow \infty} B \int_0^T \ln\left(1 + \frac{S d_0^2}{x(t)^2}\right)\mathrm{d}t,
	\end{equation}
	 the maximal transmittable data as $T\rightarrow \infty$ is
	\begin{equation}
	D_{\infty,3} = \frac{Bd_0}{v}\Bigg[ 2\sqrt{S}\left(\frac{\pi}{2} - \atan\left(\frac{1}{\sqrt{S}}\right)\right) -
	\ln\left(1+S\right)\Bigg].
	\end{equation}
\end{corollary}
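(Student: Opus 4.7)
The corollary is a specialization of Theorem~\ref{th:alpha2} with $z_0=0$ and $x_0=d_0$, so the natural plan is to substitute these values into \eqref{eq:D_infty_2}, interpreting $\atan(x_0/z_0)\to\pi/2$ in the limit $z_0\to 0^+$. Equivalently, and perhaps more cleanly, one can recompute the integral directly, since the one-dimensional geometry collapses much of the algebra.

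For a direct proof, I would first substitute $z_0=0$ and the trajectory $x(t)=d_0+vt$ into the integrand, and then change variables via $u:=d_0+vt$, $du=v\,dt$, reducing the definition of $D_{\infty,3}$ to $\frac{B}{v}\int_{d_0}^{\infty}\ln(1+Sd_0^2/u^2)\,du$. The core step is the indefinite integral $\int\ln(1+c^2/u^2)\,du$ with $c:=\sqrt{S}\,d_0$. Integration by parts, differentiating the logarithm and integrating $du$, leaves a remainder of the form $\int 2c^2/(u^2+c^2)\,du = 2c\,\atan(u/c)$, so that
\[
\int\ln\!\left(1+\frac{c^2}{u^2}\right)du = u\ln\!\left(1+\frac{c^2}{u^2}\right) + 2c\,\atan\!\left(\frac{u}{c}\right) + C.
\]

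Evaluation at the two endpoints is then routine. At $u=d_0$ the antiderivative equals $d_0\ln(1+S)+2\sqrt{S}\,d_0\,\atan(1/\sqrt{S})$, since $c^2/d_0^2=S$ and $d_0/c=1/\sqrt{S}$. As $u\to\infty$ one has $u\ln(1+c^2/u^2)\sim c^2/u\to 0$ and $2c\,\atan(u/c)\to\pi c = \pi\sqrt{S}\,d_0$. Convergence of the improper integral is automatic because the integrand is $O(1/u^2)$ at infinity. Subtracting the lower-limit values from the upper-limit ones, multiplying by $B/v$, and factoring $d_0$ yields the stated closed form. The only non-mechanical ingredient is the antiderivative above; there is no substantive obstacle.
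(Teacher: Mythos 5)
Your direct computation is correct, and it is essentially the paper's own argument: the antiderivative $u\ln(1+c^2/u^2)+2c\,\atan(u/c)$ with $c=\sqrt{S}\,d_0$, the vanishing of $u\ln(1+c^2/u^2)$ at infinity, and the endpoint values at $u=d_0$ all check out and reproduce the stated closed form; this is exactly the integration-by-parts computation of Appendix~\ref{app:C} specialized to $z_0=0$. One caveat about your first suggested route, however: if you substitute $x_0=d_0$ and $z_0\to 0^+$ literally into the displayed formula \eqref{eq:D_infty_2} and send $\atan(x_0/z_0)\to\pi/2$, the term $-2\bigl(\cdots-x_0\atan(x_0/z_0)\bigr)$ contributes an extra $+\pi d_0 B/v$ that is absent from the corollary. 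The coefficient of $\atan(x_0/z_0)$ in \eqref{eq:D_infty_2} is printed as $x_0$, whereas the derivation in Appendix~\ref{app:C} yields $z_0\atan(x_0/z_0)$, which vanishes in the limit $z_0\to 0$; your direct integral confirms that the corollary (and the $z_0$ coefficient) is the correct version, so the substitution plan only succeeds with the corrected formula. The paper also records a second route you did not need, namely setting $\alpha=2$ in \eqref{eq:D_inf_1D} and evaluating $\sum_{n=1}^\infty (-1)^{n+1}/\bigl(S^n(2n+1)\bigr)=1-\sqrt{S}\,\atan(1/\sqrt{S})$ from the arctangent series; your elementary calculus argument avoids both the series manipulation and the delicate limit, at the cost of redoing the integration.
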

\begin{proof}
	The proof follows by setting $z_0=0,\ x_0=d_0$ in \eqref{eq:D_infty_2} or by setting $\alpha=2$ in \eqref{eq:D_inf_1D} and noting (e.g. using \cite[p. 422]{knopp1990theory})
	\begin{equation}
	\sum_{n=1}^\infty \left\{\frac{(-1)^{n+1}}{S^n(2n+1)}\right\} = 1 - \sqrt{S}\atan\left(\frac{1}{\sqrt{S}}\right).
	\end{equation}
\end{proof}

\section{Conclusions}
 We have considered the amount of data that can be reliably transmitted between a stationary and a mobile agent on a straight trajectory. We have shown that, as opposed to static channels, the maximum transmittable data is bounded, even as time tends to infinity. For certain special cases of path loss and mobility profiles, we have been able to derive closed-form expressions for these bounds. These bounds have been verified through simulations and may be used for design/specification of mobile communication channels subject to data transmission constraints.

\appendices
\section{Proof of Theorem~\ref{th:one_dimension}} \label{app:A}
	Note, e.g. from \cite[p. 212]{knopp1990theory}, the identity
	\begin{equation} \label{eq:identity_log1}
	\ln(1+z) =
	\begin{cases}
	\sum_{n=1}^\infty \frac{(-1)^{n+1}z^n}{n},& \text{if } |z| < 1, z \in \mathbb{R},\\
	\ln(z) + \sum_{n=1}^\infty \frac{(-1)^{n+1}}{nz^n},& \text{if } |z| \geq 1,  z \in \mathbb{R},
	\end{cases}
	\end{equation}
	and
	\begin{equation} \label{eq:identity_csc}
	\text{csc}(z) = \frac{1}{z} + 2z \sum_{n=1}^\infty \frac{(-1)^n}{z^2 - (\pi n)^2} \quad ; \frac{z}{\pi} \not \in \mathbb{Z}, z \in \mathbb{R}.
	\end{equation}
	Starting with \eqref{eq:int_inf_1D}, we use the known affine trajectory $x(t) = x_0 + vt$ to change the variable of integration, resulting in
	\begin{equation*}
	D_{\infty,1} = \lim_{T\rightarrow \infty}\frac{B}{v} \underbrace{\int_{d_0}^{x_T} \ln\left(1 + S\left(\frac{d_0}{x}\right)^\alpha\right) \mathrm{d}x}_{:=D_a},
	\end{equation*}
	where $x_T := x(T)$. Define $\gamma := Sd_0^\alpha$. Ignoring the limits for now, and in accordance with identity \eqref{eq:identity_log1},
	\begin{align*}
	D_a &= \int_{d_0}^{\sqrt[\alpha]{\gamma}} \ln\left(1 + \frac{\gamma}{x^\alpha}\right)\mathrm{d}x +
	\int_{\sqrt[\alpha]{\gamma}}^{x_T} \ln\left(1 + \frac{\gamma}{x^\alpha}\right)\mathrm{d}x \\
	D_a &= \underbrace{\int_{d_0}^{\sqrt[\alpha]{\gamma}} \ln\left(\frac{\gamma}{x^\alpha}\right) + \sumn \frac{\onen}{n}\left(\frac{x^\alpha}{\gamma}\right)^n \mathrm{d}x}_{:=D_{a1}} \\ & +
	\underbrace{\int_{\sqrt[\alpha]{\gamma}}^{x_T} \sumn \frac{\onen}{n}\left(\frac{\gamma}{x^\alpha}\right)^n \mathrm{d}x}_{:= D_{a2}},
	\end{align*}
	where the domain of integration has been split at $\gamma x^{-\alpha}=1$. Considering just $D_{a1}$, we are able to switch the order of integration and summation, because the infinite sum is absolutely and uniformly convergent on the domain of integration, as may be shown through the Weierstrass M-test \cite[Theorem~7.10]{rudin1964principles}. Therefore,
	\begin{align*}
	D_{a1} &= \left[ x \left( \ln\left(\frac{\gamma}{x^\alpha}\right) + \alpha\right) +
	\sumn \frac{\onen}{n}\left(\frac{x^{\alpha n + 1}}{\gamma^n (\alpha n + 1)}\right) \right]_{x_0}^{\sqrt[\alpha]{S}} \\
	&= d_0 \left[ \alpha \sqrt[\alpha]{S} - \alpha - \ln(S) + \sumn \frac{\onen}{n(\alpha n + 1)}\left(\sqrt[\alpha]{S} - \frac{1}{S^n}\right)\right].
	\end{align*}
	The order of integration and summation may similarly be swapped in $D_{a2}$, resulting in
	\begin{align*}
	D_{a2} &= \sumn \frac{\onen}{n(1-\alpha n)} \left[x\left(\frac{\gamma}{x^\alpha}\right)^n\right]_{\sqrt[\alpha]{\gamma}}^{x_T} \\
	 &= \sumn \frac{\onen}{n(1-\alpha n)} \left(x_T\left(\frac{\gamma}{x_T^\alpha}\right)^n - \sqrt[\alpha]{S}d_0\right).
	\end{align*}
	The term $D_{a1}$ does not change with $x_T$, and the limit
	\begin{equation*}
	\lim_{x_T\rightarrow\infty} D_{a2} = -\sumn \frac{\onen}{n(1-\alpha n)} \left(\sqrt[\alpha]{S}d_0\right).
	\end{equation*}
	Therefore, the total transmittable data
	\begin{equation*}
	\begin{split}
	D_{\infty,1} &=  \frac{B d_0}{v} \Bigg[ \alpha \sqrt[\alpha]{S} - \alpha - \ln(S) \\& + \sumn \frac{\onen}{n(\alpha n + 1)}\left(\sqrt[\alpha]{S} - \frac{1}{S^n}\right) - \sqrt[\alpha]{S}\sumn \frac{\onen}{n(1-\alpha n)}\Bigg].
	\end{split}
	\end{equation*}
	Noting that,
	\begin{equation}
(n (\alpha n + 1))^{-1} - (n (1-\alpha n))^{-1}
	= -2\alpha(1-(\alpha n)^2)^{-1},
	\end{equation}
	we may rewrite the infinite sums in $D_{\infty,1}$, resulting in
	\begin{equation*}
	\begin{split}
	D_{\infty,1} &=  \frac{B d_0}{v} \Bigg[ \alpha \sqrt[\alpha]{S} - \alpha \underbrace{- \ln(S)  - \sumn\frac{\onen}{n S^n}}_{-\ln(1+S)} \\ & + \alpha \sumn \frac{\onen}{(\alpha n + 1)S^n} - \underbrace{2\alpha\sqrt[\alpha]{S} \sumn \frac{\onen}{1-(\alpha n)^2}}_{-\left(\pi\sqrt[\alpha]{S}\csc\left(\frac{\pi}{\alpha}\right)-\alpha\sqrt[\alpha]{S}\right)} \Bigg],
	\end{split}
	\end{equation*}
	where the transformation in the first line follows from \eqref{eq:identity_log1}. Restricting $S >1$, the transformation in the second line follows from \eqref{eq:identity_csc}. Since $\alpha\geq2$, the restriction of the domain of \eqref{eq:identity_csc} to $\frac{z}{\pi} \not \in \mathbb{Z}$ is not prohibitive. Finally,
		\begin{equation}
	D_{\infty,1} = \frac{B d_0}{v} \Bigg[ \pi\sqrt[\alpha]{S} \csc\left(\frac{\pi}{\alpha}\right) - \ln(1+S) + 
	\alpha \sumn \left\{\frac{\onen}{S^n(\alpha n + 1)}\right\} - \alpha \Bigg].
	\end{equation}

\section{Proof of Corollary~\ref{co:approximation}}\label{app:B}
	Simply ignoring the infinite sum results in \eqref{eq:D_inf_1D_approx}. The infinite sum in \eqref{eq:D_inf_1D} is bounded for $S>1$ if $\alpha\geq2$ and is monotonically decreasing in both $\alpha$ and $S$. Therefore,
	\begin{align*}
	\sumn \left\{\frac{\onen}{S^n(\alpha n + 1)}\right\} \leq  \sumn & \left\{\frac{\onen}{S^n(2 n + 1)}\right\}= \\
	&1 - \sqrt{S}\atan\left(\frac{1}{\sqrt{S}}\right) \leq 1 - \frac{\pi}{4},
	\end{align*}
	where the infinite sum is evaluated noting that \cite[p. 422]{knopp1990theory}
	\begin{align*}
	\atan(z) &= -\sum_{n=0}^\infty (-1)^{n+1}\frac{z^{2n+1}}{2n+1}, z \in \mathbb{R}.
	\end{align*}
	 Therefore,
	\begin{equation*}
		D_{\infty,1} - \tilde{D}_{\infty,1} \leq \frac{B d_0 \alpha}{v} \left( 1 - \sqrt{S}\atan\left(\frac{1}{\sqrt{S}}\right)\right) \leq \frac{B d_0 \alpha}{v} \left( 1 - \frac{\pi}{4}\right).
	\end{equation*}
	Since $\lim_{S\rightarrow \infty} \sqrt{S}\atan\left(\frac{1}{\sqrt{S}}\right)=1$, the error $e_{\infty,1}\rightarrow 0$ as $S\rightarrow \infty$.

\section{Proof of Theorem~\ref{th:alpha2}}\label{app:C}
	Note that, using integration by parts,
	\begin{equation}\label{eq:identity_log2}
	\int_c^b \ln(z^2 + a)\mathrm{d}z= \left[z\left(\ln(z^2+a) - 2\right) + 2\sqrt{a}\atan\left(\frac{z}{\sqrt{b}}\right) \right]_c^b.
	\end{equation}
	We begin by rewriting \eqref{eq:int_inf_alpha2} as
	\begin{equation}
		D_{\infty,2} =  \lim_{x_T\rightarrow \infty} B \int_{x_0}^{x_T} \ln\left((z_0^2 + S d_0^2) + x^2\right) - \ln\left(z_0^2 + x^2\right)\mathrm{d}x.
	\end{equation}
	Integrating the first logarithm, using $\epsilon=z_0^2+Sd_0^2$ and \eqref{eq:identity_log2},
	\begin{align*}
 		\int_{x_0}^{x_T} \ln&\left((z_0^2 + S d_0^2) + x^2\right) \mathrm{d}x = 2(x_0-x_T) + x_t\ln(x_T^2+\epsilon) \\ & \ - x_0\ln(x_0^2+\epsilon) + 2\sqrt{\epsilon}\left(\atan\left(\frac{x_T}{\sqrt{\epsilon}}\right) -
 			\atan\left(\frac{x_0}{\sqrt{\epsilon}}\right)\right).
	\end{align*}
	Similarly, the second logarithm is integrated as
	\begin{align*}
	 \int_{x_0}^{x_T} \ln & \left(z_0^2 + x^2\right)\mathrm{d}x = 2(x_0-x_T) + x_T\ln(x_T^2+z_0^2)
	 \\ & - x_0\ln(x_0^2+z_0^2) + 2z_0\left(\atan\left(\frac{x_T}{z_0}\right) -
	\atan\left(\frac{x_0}{z_0}\right)\right).
	\end{align*}
	Therefore,
	\begin{align*}
	D_{\infty,2} =&  \lim_{x_T\rightarrow \infty}  \frac{B}{v} \Bigg[ x_T\ln\left(\frac{x_T^2+\epsilon}{x_T^2+z_0^2}\right) +
	x_0\ln\left(\frac{x_0^2+z_0^2}{x_0^2+\epsilon}\right) \\
	& + 2\sqrt{\epsilon}\left(\atan\left(\frac{x_T}{\sqrt{\epsilon}}\right) - \atan\left(\frac{x_0}{\sqrt{\epsilon}}\right)\right) \\
	&+ 2z_0\left(\atan\left(\frac{x_T}{z_0}\right) - \atan\left(\frac{x_0}{z_0}\right)\right)
	\Bigg] \\
	=& \frac{B}{v}\Bigg[ x_0\ln\left(1 + S\right) + \pi\left(\sqrt{z_0^2+Sd_0^2}-z_0\right)
	 \\
	&- 2\left(\sqrt{z_0^2+Sd_0^2}\atan\left(\frac{x_0}{\sqrt{z_0^2+Sd_0^2}}\right) - z_0\atan\left(\frac{x_0}{z_0}\right) \right)
    \Bigg].\\
	\end{align*}

\bibliography{DataLimitBib}
\bibliographystyle{ieeetr}
\end{document}